\newtheorem{lemma}{Lemma}
\newtheorem{problem}{Problem}
\newtheorem{theorem}{Theorem}
\newcommand{\derive}{\mathit{val}}
\newcommand{\deriveInt}{\mathit{itv}}
\newcommand{\VarOcc}{\mathit{vOcc}}
\newcommand{\suffix}{\mathit{suf}}
\newcommand{\prefix}{\mathit{pre}}
\newcommand{\variable}[1]{X_{\langle #1\rangle}}
\newcommand{\parent}{\mathit{parent}}
\newcommand{\sstr}{\mathit{str}}
\newcommand{\depth}{\mathit{depth}}
\date{}
\title{Computing convolution on grammar-compressed text}
\author{\\
  Toshiya Tanaka, Tomohiro I, Shunsuke Inenaga,\\ Hideo Bannai, and
  Masayuki Takeda\\
  {\small Department of Informatics, Kyushu University, Fukuoka
    819-0395, Japan}\\
  {\small {\tt \{toshiya.tanaka,tomohiro.i,inenaga,bannai,takeda\}@inf.kyushu-u.ac.jp}}
}
\begin{document}
\maketitle
\begin{abstract}
The convolution between a text string $S$ of length $N$
and a pattern string $P$ of length $m$ can be computed in $O(N \log m)$ time by FFT.
It is known that various types of approximate string matching problems 
are reducible to convolution.
In this paper, we assume that the input text string is given in a compressed form,
as a \emph{straight-line program (SLP)}, which is a context free grammar
in the Chomsky normal form that derives a single string.
Given an SLP $\mathcal{S}$ of size $n$ describing a text $S$ of length $N$,
and an uncompressed pattern $P$ of length $m$,
we present a simple $O(nm \log m)$-time algorithm to compute 
the convolution between $S$ and $P$.
We then show that this can be improved to $O(\min\{nm, N-\alpha\} \log m)$ time,
where $\alpha \geq 0$ is a value that represents the amount of
redundancy that the SLP captures with respect to the length-$m$ substrings.
The key of the improvement is our new algorithm 
that computes the convolution between a trie of size $r$ and 
a pattern string $P$ of length $m$ in $O(r \log m)$ time.
\end{abstract}

\section{Introduction}

String matching is a task of find all occurrences of 
a pattern of length $m$ in a text of length $N$.
In various fields of computer science such as bioinformatics, 
image analysis and data compression,
detecting approximate occurrences of a pattern is of great importance.
Fischer and Paterson~\cite{Fischer1974SMa} found that 
various approximate string matching problems can be solved efficiently
by reduction to convolution,
and many studies have followed since.
For instance, 
it was shown in~\cite{Fischer1974SMa} that the string matching problem with don't
cares can be solved in $O(N \log m \log \sigma)$ time,
where $\sigma$ is the alphabet size.
This was later improved to $O(N \log m)$ time~\cite{Cole2002Vcm,Clifford2007Sdw}.
An $O(N \sqrt{m \log m})$-time algorithm for computing 
the Hamming distances between the pattern and all text substrings of length $m$ 
was proposed in~\cite{Abrahamson1987GSM}.
Many, if not all, large string data sets are stored in a compressed form,
and are later decompressed in order to be used and/or analyzed.
\emph{Compressed string processing (CSP)} arose from 
the recent rapid increase of digital data,
as an approach to process a given compressed string
\emph{without explicitly decompressing the entire string}.
A lot of CSP algorithms have been proposed in the last two decades,
which improve on algorithms working on uncompressed strings,
both in theory~\cite{NJC97,crochemore03:_subquad_sequen_align_algor_unres_scorin_matric,hermelin09:_unified_algor_accel_edit_distan,gawrychowski11:_LZ_comp_str_fast_} 
and in practice~\cite{shibata00:_speed_up_patter_match_text_compr,goto11:_fast_minin_slp_compr_strin,Goto2012SUq}.

The goal of this paper is
efficient computation of the convolution between
a compressed text and an uncompressed pattern.
In this paper, we assume that the input string
is represented by a \emph{straight-line program (SLP)},
which is a context free grammar in the Chomsky normal form
that derives a single string.
It is well known that outputs of various
grammar based compression algorithms~\cite{SEQUITUR,LarssonDCC99},
as well as those of dictionary compression algorithms~\cite{LZ78,LZW,LZ77,LZSS},
can be regarded as, or be quickly transformed to, SLPs~\cite{rytter03:_applic_lempel_ziv}.
Hence, algorithmic research working on SLPs is of great significance.
We present two efficient algorithms that compute 
the convolution between an SLP-compressed text of size $n$
and an uncompressed pattern of length $m$.
The first one runs in $O(nm \log m)$ time and space,
which is based on \emph{partial decompression} of 
the SLP-compressed text.
Whenever $nm = o(N)$,
this is more efficient than the existing FFT-based $O(N \log m)$-time algorithm 
for computing the convolution of a string of length $N$ and a pattern of length $m$.
However, in the worst case $n$ can be as large as $O(N)$.
Our second algorithm deals with such a case.
The key is a reduction of the covolution of an SLP and a pattern,
to the convolution of a trie and a pattern.
We show how, given a trie of size $r$ and pattern of length $m$,
we can compute the convolution between all strings of length $m$ in the
trie and the pattern in $O(r \log m)$ time.
This result gives us an $O(\min\{nm, N-\alpha\} \log m)$-time algorithm
for computing the convolution between an SLP-compressed text and a pattern,
where $\alpha \geq 0$ represents a quantity of redundancy of 
the SLP w.r.t. the substrings of length $m$.
Notice that our second method is at least as efficient as 
the existing $O(N \log m)$ algorithm,
and can be much more efficient when a given SLP is small.
Further, our result implies that \emph{any} string matching problems 
which are reducible to convolution can be efficiently solved on SLP-compressed text.

\subsection{Related work}

In~\cite{freschi10:_lz78},
an algorithm which computes the convolution between
a text and a pattern, using Lempel-Ziv 78 factorization~\cite{LZ78},
was proposed.
Given a text of length $N$ and a pattern of length $m$,
the algorithm in~\cite{freschi10:_lz78} computes the convolution 
in $O(N + mL)$ time and space,
where $L$ is the number of LZ78 factors of the text.
The authors claimed that 
$L = O(\frac{N}{\log N} h)$,
where $0 \leq h \leq 1$ is the entropy of the text.
However, this holds only on some strings over a constant alphabet,
and even on a constant alphabet 
there exist strings with $L = O(\frac{N}{\log N})$~\cite{crochemore03:_subquad_sequen_align_algor_unres_scorin_matric}.
Moreover, when the text is drawn from integer alphabet $\Sigma = [1,N]$, 
then clearly $L = \Theta(N)$.
In this case, the algorithm of~\cite{freschi10:_lz78} takes at least $O(mN)$ time 
(excluding the time cost to compute the LZ78 factorization).
Since the LZ78 encoding of a text can be seen as an SLP,
and since the running time of our algorithm is independent of the alphabet size,
this paper presents a more efficient algorithm to compute the convolution
on LZ78-compressed text over an integer alphabet.
Furthermore, our algorithm is much more general and can be applied to arbitrary SLPs.

\section{Preliminaries}

\subsection{Strings}

Let $\Sigma$ be a finite {\em alphabet}.
An element of $\Sigma^*$ is called a {\em string}.
The length of a string $S$ is denoted by $|S|$. 
The empty string $\varepsilon$ is a string of length 0,
namely, $|\varepsilon| = 0$.
For a string $S = XYZ$, $X$, $Y$ and $Z$ are called
a \emph{prefix}, \emph{substring}, and \emph{suffix} of $S$, respectively.
The $i$-th character of a string $S$ is denoted by $S[i]$, where $1 \leq i \leq |S|$.
For a string $S$ and two integers $1 \leq i \leq j \leq |S|$, 
let $S[i:j]$ denote the substring of $S$ that begins at position $i$ and ends at
position $j$.

Our model of computation is the word RAM:
We shall assume that the computer word size is at least $\log_2 |S|$, 
and hence, standard operations on
values representing lengths and positions of string $S$
can be manipulated in constant time.
Space complexities will be determined by the number of computer words (not bits).

\subsection{Convolution}
Let $V_S$ and $V_P$ be two vectors on some field whose lengths are $N$ and $m$, respectively,
with $m \leq N$.
The \emph{convolution} $C$ between $V_S$ and $V_P$ is defined by
\begin{eqnarray}
C[i] = \sum_{j=1}^{m}V_P[j] \cdot V_S[i+j-1]
\end{eqnarray}
for $1 \leq i \leq N-m+1$.
It is well-known that 
the vector $C$ can be computed in $O(N \log m)$ time by FFT.
The algorithm samples $V_S$ at 
every $(km+1)$-th position of $V_S$ for $0 \leq k \leq \lfloor \frac{N}{m} \rfloor$.
For each sampled position
the algorithm is able to compute 
the convolution between the subvector $V_S[km+1:(k+2)m]$ of length $2m$
and $V_P$ in $O(m \log m)$ time,
and therefore the whole vector $C$ can be computed in 
a total of $O(N \log m)$ time.

We can solve several types of approximate matching problems 
for a text $S$ of length $N$ and a pattern $P$ of length $m$,
by suitably mapping characters $P[j]$ and $S[i+j-1]$ to numerical values.
For example, let $\phi_{a}(x) = 1$ if $x = a$ and $0$ otherwise, for any $a \in \Sigma$,
then $\sum_{a \in \Sigma} \sum_{j=1}^{m} \phi_{a}(P[j]) \cdot \phi_{a}(S[i+j-1])$ represents 
the number of matching positions when the pattern is aligned at position $i$ of the text.
Consequently, the Hamming distances of the pattern and the text substrings 
for all positions $1 \leq i \leq N-m+1$ can be computed in a total of $O(|\Sigma| N \log m)$ time,
by computing convolution using mappings $\phi_{a}$ for all $a \in \Sigma$ and summing them up, 
which is a classic result in~\cite{Fischer1974SMa}.

For convenience, in what follows we assume strings $S$ and $P$ on integer alphabet,
and consider convolution between $S$ and $P$.

\subsection{Straight Line Programs}
\label{sec:slp}

\begin{figure}[tb]
  \centerline{\includegraphics[width=0.5\textwidth]{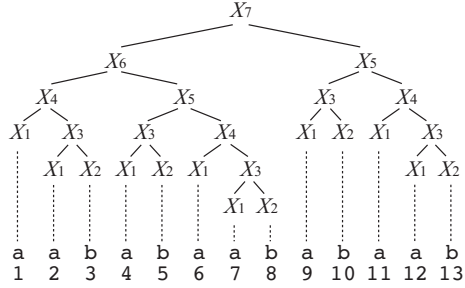}}
  \caption{
    The derivation tree of
    SLP $\mathcal S = \{ X_1 \rightarrow \mathtt{a}$, $X_2 \rightarrow \mathtt{b}$, $X_3 \rightarrow X_1X_2$,
    $X_4 \rightarrow X_1X_3$, $X_5 \rightarrow X_3X_4$, $X_6
    \rightarrow X_4X_5$, $X_7 \rightarrow X_6X_5 \}$,
    representing string $S = \derive(X_7) = \mathtt{aababaababaab}$.
  }
  \label{fig:SLP}
\end{figure}

A {\em straight line program} ({\em SLP}) is a set of assignments 
$\mathcal S = \{ X_1 \rightarrow expr_1, X_2 \rightarrow expr_2, \ldots, X_n \rightarrow expr_n\}$,
where each $X_i$ is a variable and each $expr_i$ is an expression, where
$expr_i = a$ ($a\in\Sigma$), or $expr_i = X_{\ell(i)} X_{r(i)}$~($i > \ell(i),r(i)$).
It is essentially a context free grammar in the Chomsky normal form, that derives a single string.
Let $\derive(X_i)$ represent the string derived from variable $X_i$.
To ease notation, we sometimes associate $\derive(X_i)$ with $X_i$ and
denote $|\derive(X_i)|$ as $|X_i|$,
and $\derive(X_i)([u:v])$ as $X_i([u:v])$ for any interval $[u:v]$.
An SLP $\mathcal{S}$ {\em represents} the string $S = \derive(X_n)$.
The \emph{size} of the program $\mathcal{S}$ is the number $n$ of
assignments in $\mathcal{S}$. 
Note that $|S|$ can be as large as $\Theta(2^n)$. However, we assume
as in various previous work on SLP, 
that the computer word size is at least $\log_2 |S|$, and hence,
values representing lengths and positions of $S$
in our algorithms can be manipulated in constant time.

The derivation tree of SLP $\mathcal{S}$ is a labeled
ordered binary tree where each internal node is labeled with a
non-terminal variable in $\{X_1,\ldots,X_n\}$, and each leaf is labeled with a terminal character in $\Sigma$.
The root node has label $X_n$.
Let $\mathcal{V}$ denote the set of internal nodes in
the derivation tree.
For any internal node $v\in\mathcal{V}$, 
let $\langle v\rangle$ denote the index of its label
$\variable{v}$.
Node $v$ has a single child which is a leaf labeled with $c$
when $(\variable{v} \rightarrow c) \in \mathcal{S}$ for some $c\in\Sigma$,
or
$v$ has a left-child and right-child respectively denoted $\ell(v)$ and $r(v)$,
when
$(\variable{v}\rightarrow \variable{\ell(v)}\variable{r(v)}) \in \mathcal{S}$.
Each node $v$ of the tree derives $\derive(\variable{v})$,
a substring of $S$,
whose corresponding interval $\deriveInt(v)$,
with $S(\deriveInt(v)) = \derive(\variable{v})$,
can be defined recursively as follows.
If $v$ is the root node, then $\deriveInt(v) = [1:|S|]$.
Otherwise, if $(\variable{v}\rightarrow
\variable{\ell(v)}\variable{r(v)})\in\mathcal{S}$,
then,
$\deriveInt(\ell(v)) = [b_v:b_v+|\variable{\ell(v)}|-1]$
and
$\deriveInt(r(v)) = [b_v+|\variable{\ell(v)}|:e_v]$,
where $[b_v:e_v] = \deriveInt(v)$.

For any interval $[b:e]$ of $S (1\leq b \leq e \leq |S|)$, 
let $\xi_{\mathcal{S}}(b,e)$ denote the deepest node $v$ in the derivation tree,
which derives an interval containing $[b:e]$, that is,
$\deriveInt(v)\supseteq [b:e]$,
and no proper descendant of $v$
satisfies this condition.
We say that node $v$ {\em stabs} interval $[b:e]$,
and $\variable{v}$ is called the variable that stabs the interval.
If $b = e$, we have that
$(\variable{v} \rightarrow c) \in \mathcal{S}$ for some $c\in\Sigma$,
and $\deriveInt(v) = b = e$.
If $b < e$, then
we have $(\variable{v} \rightarrow
\variable{\ell(v)}\variable{r(v)})\in\mathcal{S}$,
$b\in \deriveInt(\ell(v))$, and  $e\in\deriveInt(r(v))$.

\begin{theorem}[\cite{philip11:_random_acces_gramm_compr_strin}]
  \label{theo:random_access}
  Given an SLP $\mathcal{S} = \{ X_i \rightarrow \mathit{expr}_i \}_{i=1}^n$,
  it is possible to pre-process $\mathcal{S}$ in $O(n)$ time and
  space, so that for any interval $[b:e]$ of $S$, $1 \leq b \leq e \leq N$,
  its stabbing variable %
  $\variable{\xi_\mathcal{S}(b,e)}$
  can be computed in $O(\log N)$ time.
\end{theorem}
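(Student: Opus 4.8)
The plan is to build a heavy-path decomposition of the derivation tree, realized compactly on the grammar DAG, and to answer a stabbing query by a single root-to-node descent whose cost is governed by the number of light edges it crosses.

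First I would precompute, in a single bottom-up pass over the $n$ variables, the expansion length $|X_i| = |\derive(X_i)|$ for every variable; since $i > \ell(i), r(i)$, processing variables in increasing order of index yields all lengths in $O(n)$ time and space. For each internal variable $X_i \rightarrow X_{\ell(i)} X_{r(i)}$ I would designate the child with the larger expansion as its \emph{heavy} child (ties broken arbitrarily) and the other as \emph{light}. This induces a partition of the derivation-tree nodes into heavy paths; crucially, crossing a light edge at least halves the length of the remaining expansion, so any root-to-node path in the derivation tree crosses $O(\log N)$ light edges.

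Next I would describe the query. To locate $\variable{\xi_{\mathcal{S}}(b,e)}$ I descend from the root, maintaining the interval $\deriveInt(v) = [b_v : e_v]$ of the current node $v$. At a node with rule $X_i \rightarrow X_{\ell(i)} X_{r(i)}$ and split point $z = b_v + |X_{\ell(i)}| - 1$, I stop and output $X_i$ exactly when $b \le z < e$, since then $b$ lies in the left child's interval and $e$ in the right child's; otherwise both endpoints lie in the same child and I descend into it, updating the offsets accordingly. The descent therefore follows precisely the path from the root to $\xi_{\mathcal{S}}(b,e)$. To avoid spending time proportional to the (possibly long) heavy paths, I would store for each heavy path the sorted sequence of cumulative offsets of its hanging light children, so that a single predecessor search locates, from a given entry offset, the node at which the descent leaves the heavy path (or reaches its terminal bottom). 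Since the descent crosses only $O(\log N)$ light edges, it visits only $O(\log N)$ heavy paths.

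The main obstacle is reducing the query cost from the naive $O(\log^2 N)$ — one predecessor search of cost $O(\log N)$ per heavy path, across $O(\log N)$ heavy paths — down to $O(\log N)$. This is the technical heart of the result of~\cite{philip11:_random_acces_gramm_compr_strin}: the per-heavy-path searches must be made to telescope, which is achieved by organizing the offset data in a weighted/biased search structure (equivalently, by a careful fractional-cascading-style linkage between consecutive heavy paths along the descent) so that the total search work over the whole path is $O(\log N)$ while the space remains $O(n)$. Granting that structure, correctness follows from the stopping condition above, the preprocessing and space bounds follow from the single bottom-up pass and the linearity of the heavy-path decomposition, and the query bound follows from the $O(\log N)$ light-edge count combined with the telescoped search cost.
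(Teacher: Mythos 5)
The first thing to note is that the paper offers no proof of this theorem at all: it is imported verbatim, with citation, from~\cite{philip11:_random_acces_gramm_compr_strin}, so the comparison must be against that cited work. Your sketch does reproduce its architecture faithfully at the top level: expansion lengths $|X_i|$ computed bottom-up over the DAG in $O(n)$ time, heavy child chosen by larger expansion so that each light edge at least halves the remaining length (hence $O(\log N)$ light edges on any root-to-node path), a descent that maintains the current interval and stops at the split point with the correct condition $b \leq z < e$ (one remark: for $b = e$ this condition never fires, and you should say explicitly that the descent then runs to a terminal rule $X \rightarrow c$, which matches the paper's definition of $\xi_{\mathcal{S}}(b,b)$), and a predecessor search per heavy path to skip over it in one step.

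The genuine gap is that you ``grant'' precisely the part that constitutes the theorem's content. Everything up to that point gives only $O(\log^2 N)$ query time and, worse, your proposed $O(n)$-space realization is not established: in the grammar DAG the heavy edges form a forest in which many heavy paths share suffixes, so you cannot store a separate sorted array of cumulative light-child offsets per heavy path without risking superlinear space, and the same variable occurs at many absolute text positions, so all searches must be phrased over relative (per-path prefix-sum) offsets. The cited work resolves both issues simultaneously: it represents the shared heavy paths as a forest and replaces plain predecessor search by a \emph{biased} (weight-balanced) search structure in which descending from a node of subtree weight $w$ to a light child of weight $w'$ costs $O(1 + \log(w/w'))$, so that the costs over the $O(\log N)$ visited heavy paths telescope to $O(\log N)$ total, all within $O(n)$ space. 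Asserting that ``the searches can be made to telescope'' is naming the theorem, not proving it; a complete proof must construct this biased structure (or an equivalent, e.g.\ an interval-biased search tree over the heavy-path forest) and verify the telescoping bound and the linear space simultaneously. As it stands, your writeup is an accurate roadmap to the proof in~\cite{philip11:_random_acces_gramm_compr_strin} but is incomplete at its decisive step.
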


SLPs can be efficiently pre-processed to hold various information.
$|X_i|$ can be computed for all variables $X_i
(1\leq i\leq n)$ in a total of $O(n)$ time by a simple dynamic
programming algorithm.
Also, the following lemma is useful for partial decompression of
a prefix of a variable.

\begin{lemma}[\cite{gasieniec05:_real_time_traver_gramm_based_compr_files}]
  \label{label:prefix_decompression}
  Given an SLP $\mathcal{S} = \{ X_i \rightarrow \mathit{expr}_i \}_{i=1}^n$,
  it is possible to pre-process $\mathcal{S}$ in $O(n)$ time and
  space, so that for any variable $X_i$ and $1 \leq q \leq |X_i|$,
  the prefix of $\derive(X_i)$ of length $q$, 
  i.e. $\derive(X_i)[1:q]$, can be computed in $O(q)$ time.
\end{lemma}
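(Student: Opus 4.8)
The plan is to build the prefix of $\derive(X_i)$ by a greedy top-down descent that, at each binary variable $X_j \rightarrow \variable{\ell(j)}\variable{r(j)}$, compares the remaining required length $q'$ against the precomputed value $|X_{\ell(j)}|$: if $q' \le |X_{\ell(j)}|$ the whole remaining prefix lies inside the left child, so we descend into $X_{\ell(j)}$; otherwise we emit all of $\derive(X_{\ell(j)})$ and continue into $X_{r(j)}$ with $q' - |X_{\ell(j)}|$ characters still to produce. The lengths $|X_j|$ are available after the $O(n)$ preprocessing already noted, so each comparison costs $O(1)$, and a terminal variable simply outputs its character. The subtrees we fully emit are pairwise disjoint factors of the length-$q$ prefix, so their total length, and hence the total cost of these full decompressions, is $O(q)$. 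If this were the whole story we would be done.

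The difficulty is the cost of the descent itself, specifically the maximal runs of \emph{left} moves that emit nothing. Such a run follows a leftmost spine $u_0 = X_j$, $u_1 = \ell(u_0)$, $u_2 = \ell(u_1), \dots$, whose length $d$ can be $\Theta(n)$ while $q$ is tiny; for instance a left-leaning ``comb'' grammar deriving $a^N$ forces a naive walk to spend $\Theta(n)$ steps just to locate the second output character. A step-by-step descent therefore yields only an $O(q + d)$ bound, and $d$ may be as large as $n \gg q$. Overcoming this silent descent is the crux of the lemma.

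To remove the $d$ term I would precompute the leftmost-child forest $F$ on the variables, in which every binary $X_j$ points to $X_{\ell(j)}$ and the terminal variables are the roots; this takes $O(n)$ time and space, and the leftmost spine of any variable is exactly its root-path in $F$. Along such a spine the lengths $|X_{u_0}| > |X_{u_1}| > \cdots$ are strictly decreasing, so the node at which the descent must finally turn right, namely the deepest spine node whose length is still at least $q'$, is determined by a single \emph{weighted ancestor} query on $F$. Answering each such query in constant time after the linear preprocessing lets us jump directly past the silent portion of the spine; we then traverse only the emitting portion, whose node count is bounded by the number of characters it produces and hence by $q'$. Since every right turn, and thus every recursive call and every weighted-ancestor query, outputs at least one character, there are at most $q$ of them in total, so the whole procedure runs in $O(q)$ time. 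The main obstacle, then, is engineering these weighted-ancestor jumps so that the silent spine descents never dominate the running time; once they are available in $O(1)$, both the emission and the jump bookkeeping are straightforwardly $O(q)$.
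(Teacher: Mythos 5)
The paper offers no proof of this lemma at all---it is quoted verbatim from Gasieniec et al.~\cite{gasieniec05:_real_time_traver_gramm_based_compr_files}---so your attempt can only be judged on its own merits and against the cited construction. Your overall framing is sound: the greedy length-guided descent is correct, the fully emitted left subtrees are disjoint and cost $O(q)$ in total, there are at most $q$ right turns, and you rightly identify the silent left-spine descents as the entire difficulty (the naive bound is indeed $O(q+d)$ with $d$ up to $\Theta(n)$, as your $a^N$ comb example shows). The gap is the single step you lean all of this on: the claim that weighted ancestor queries on the leftmost-child forest $F$ can be answered in $O(1)$ time after $O(n)$ preprocessing. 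That is asserted, not proved, and it is not an off-the-shelf primitive. A weighted ancestor query restricted to one spine is exactly a predecessor query over the lengths $|X_{u_0}| > |X_{u_1}| > \cdots$, and these lengths form an essentially arbitrary increasing sequence of values up to $N \leq 2^n$ (the increments are lengths $|X_{r(u)}|$ of other variables, which the grammar can realize almost freely). The known linear-space weighted-ancestor structures answer queries in $O(\log\log N)$ time, not $O(1)$; constant-time weighted ancestors are available only for trees with special weight structure such as suffix trees, a result that moreover postdates the setting here, and $F$ enjoys no such structure. With the tools you can actually invoke, your argument yields $O(q \log\log N)$, which does not match the lemma.

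Nor does the amortization you gesture at rescue the bound: it is true that every query is followed by at least one emitted character, but that only caps the \emph{number} of queries at $q$; if each query costs $\omega(1)$ (e.g., a turn emitting a single character, repeated $\Theta(q)$ times), the total is $\omega(q)$. So the super-constant per-query cost is not absorbed, and the crux of the lemma remains unproved in your write-up. For contrast, the cited proof takes a genuinely different route: rather than jumping past silent spines with predecessor-type queries, Gasieniec et al.\ preprocess the grammar in $O(n)$ time so that $\derive(X_i)$ can be generated strictly left to right with constant delay per character, using precomputed shortcut pointers and stack bookkeeping in which every traversal step is charged to a distinct output character; truncating that generation after $q$ characters gives the $O(q)$ bound directly, with no weighted-ancestor machinery. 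To repair your proof you would need either to establish $O(1)$-time weighted ancestors for this particular forest (a nontrivial claim you would have to argue against the predecessor reduction above, while also keeping the preprocessing within $O(n)$) or to replace the jump mechanism with a charging scheme in the spirit of the cited work.
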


\subsection{Problem}

In this paper we tackle the following problem.

\begin{problem}
Given an SLP $\mathcal{S} = \{ X_i \rightarrow \mathit{expr}_i \}_{i=1}^n$ 
describing a text $S$
and an uncompressed pattern $P$ of length $m$,
compute a compact representation of the convolution $C$
between $S$ and $P$.
\end{problem}
By ``compact representation'' above,
we mean a representation of convolution $C$ whose size 
is dependent (and polynomial) on $n$ and $m$, and not on $N = |S|$.
In the following sections, 
we will present our algorithms to solve this problem.
We will also show that given a position $i$ of the uncompressed text $S$
with $1 \leq i \leq N - m + 1$,
our representation is able to return $C[i]$ quickly.

\section{Basic algorithm}

In this section, we describe our compact representation 
of the convolution $C$ for a string $S$ represented as
an SLP $\mathcal{S}$ of size $n$ and a pattern $P$ of length $m$.
Our representation is based on the fact that the value of the 
convolution depends only on the substrings of length $m$ of $S$.
We use compact representations of all substrings of length $m$ of $S$,
which were proposed in~\cite{goto11:_fast_minin_slp_compr_strin,Goto2012SUq}.

For any variable $X_j = X_\ell X_r$, let 
$t_j = \suffix(X_\ell, m-1) \prefix(X_r,m-1)$.
Namely, $t_j$ is the substring of $\derive(X_j)$ 
obtained by concatenating the suffix of $\derive(X_\ell)$ of length at most $m-1$,
and the prefix of $\derive(X_r)$ of length at most $m-1$ (see also Figure~\ref{fig:2(m-1)}).
By the arguments of Section~\ref{sec:slp},
there exists a unique variable $X_j$ that stabs the interval $[i:i+m-1]$.
Hence, computing $C$ reduces 
to computing the convolution between $t_j$ and pattern $P$
for all variables $X_j$.

\begin{figure}[tb]
  \centerline{\includegraphics[width=0.6\textwidth]{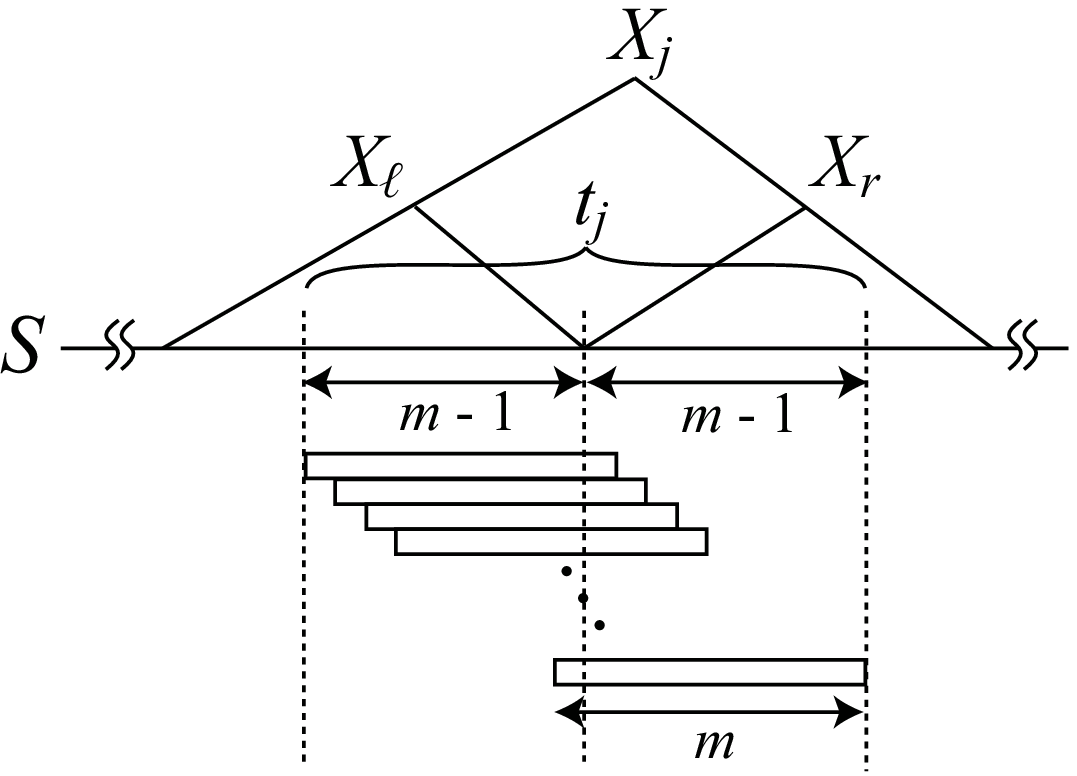}}
  \caption{Substring $t_j$ of $\derive(X_j)$.}
  \label{fig:2(m-1)}
\end{figure}

\begin{theorem}
Given an SLP $\mathcal{S}$ of size $n$ representing 
a string $S$ of length $N$, and pattern $P$ of length $m$, 
we can compute an $O(nm)$-size representation of convolution $C$ for $S$ and $P$ 
in $O(nm \log m)$ time.
Given a text position $1 \leq i \leq N - m + 1$,
our representation returns $C[i]$ in $O(\log N)$ time.
\end{theorem}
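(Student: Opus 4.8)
The plan is to turn the observation already recorded before the statement into a data structure. Since $C[i]$ depends only on the length-$m$ substring $S[i:i+m-1]$, and since this window is captured by the string $t_j=\suffix(X_\ell,m-1)\prefix(X_r,m-1)$ of the unique variable $X_j=X_\ell X_r$ that stabs $[i:i+m-1]$, I would precompute, for every binary variable $X_j$, the convolution of $t_j$ against $P$ and store these $n$ short vectors as the representation. A query then reduces to (i) locating the stabbing variable and (ii) reading off the single entry of its stored convolution that corresponds to the alignment at position $i$.

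For the construction I would first run the $O(n)$-time preprocessing of Theorem~\ref{theo:random_access} and of Lemma~\ref{label:prefix_decompression}, and in addition its left–right mirror: applying the same lemma to the SLP in which every rule $X\to X_\ell X_r$ is replaced by $X\to X_r X_\ell$ lets me extract a length-$q$ suffix of any $\derive(X)$ in $O(q)$ time by decompressing a prefix of the reversed derivation and reversing it. I would also compute all $|X_i|$ in $O(n)$ time. Then, for each of the at most $n$ binary variables $X_j=X_\ell X_r$, I materialize $t_j$ in $O(m)$ time, since $|t_j|\le 2(m-1)$. Because $|t_j|=O(m)$ and $|P|=m$, the convolution $D_j$ of $t_j$ and $P$ is computed by FFT in $O(m\log m)$ time and has length at most $|t_j|-m+1=O(m)$. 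Storing all $D_j$ therefore costs $O(nm)$ space and $O(nm\log m)$ time in total.

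For a query position $i$ (assume $m\ge 2$; the case $m=1$ is trivially $C[i]=V_P[1]\cdot V_S[i]$) I would use Theorem~\ref{theo:random_access} to obtain in $O(\log N)$ time the stabbing node $v=\xi_{\mathcal S}(i,i+m-1)$, its variable $X_j=\variable{v}$ with children $X_\ell=\variable{\ell(v)}$, $X_r=\variable{r(v)}$, and the global start $b_j$ of $\deriveInt(v)$. The correctness hinge is that $S[i:i+m-1]$ lies entirely within the occurrence of $t_j$ inside $\derive(X_j)$: by the stabbing property for $b<e$ quoted in Section~\ref{sec:slp}, $i\in\deriveInt(\ell(v))$ and $i+m-1\in\deriveInt(r(v))$, so the window straddles the $X_\ell/X_r$ boundary, cannot reach left of the length-$(m-1)$ suffix of $X_\ell$, and cannot reach right of the length-$(m-1)$ prefix of $X_r$. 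Letting $s_j=\max\{1,\,|X_\ell|-m+2\}$ be the local start of $t_j$ in $\derive(X_j)$, the window begins inside $t_j$ at offset $k=(i-b_j+1)-s_j+1$, and a short check shows $t_j[k+j'-1]=S[i+j'-1]$ for all $1\le j'\le m$, so $C[i]=D_j[k]$ can be returned in $O(1)$ additional time.

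The step I expect to be the main obstacle is precisely this containment-and-offset argument: verifying that the stabbing variable's window $t_j$ always contains $S[i:i+m-1]$ and computing the exact index $k$ into $D_j$, paying attention to the truncated boundary cases $|X_\ell|<m-1$ and $|X_r|<m-1$ where the suffix/prefix is shorter than $m-1$ and $s_j$ clamps to $1$. A secondary point to confirm is that the random-access structure of Theorem~\ref{theo:random_access} can also report $b_j$ rather than just the variable, which its root-to-node traversal supplies naturally; once $b_j$ is available the offset computation is a constant-time arithmetic expression.
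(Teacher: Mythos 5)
Your proposal is correct and follows essentially the same route as the paper's proof: precompute $t_j = \suffix(X_\ell,m-1)\prefix(X_r,m-1)$ for every binary variable via Lemma~2, run FFT on each $t_j$ against $P$ in $O(m\log m)$ time for $O(nm\log m)$ total and $O(nm)$ space, and answer queries by locating the stabbing variable and its text position in $O(\log N)$ time via Theorem~1. Your additions --- the reversed-SLP trick for suffix extraction, the explicit containment argument, and the offset formula with clamping at truncated boundaries --- merely make explicit details the paper leaves implicit (it cites an external reference for recovering the node's text position), and they check out.
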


\begin{proof}
Let $t_j = \suffix(X_\ell, m-1) \prefix(X_r,m-1)$ for any variable $X_j = X_\ell X_r$.
Since $|t_j| \leq 2m-2$,
we can compute each $t_j$ in $O(m)$ time by Lemma~\ref{label:prefix_decompression}.
We then compute the convolution 
between $t_j$ and $P$ in $O(m\log m)$ time using the FFT algorithm.
Since there are $n$ variables, it takes a total of $O(n m \log m)$ time
and the total size of our representation is $O(nm)$.

By Theorem~\ref{theo:random_access} we can compute 
the stabbing variable in $O(\log N)$ time.
It is also possible to compute in $O(\log N)$ time 
the text position corresponding to the node of the derivation tree of $\mathcal{S}$
representing the stabbing variable~\cite{bannaiIT12:_LZ78_grammar_compr_}.
Thus $C[i]$ can be answered in $O(\log N)$ time.
\end{proof}

By a similar argument to Section 7 in~\cite{philip11:_random_acces_gramm_compr_strin},
we obtain the following:
\begin{theorem} \label{theo:matching}
Given a compact representation of the convolution between 
a string $S$ and a pattern $P$ described above,
we can output the set $occ$ of all approximate occurrences of $P$ in $S$
in $O(|occ|)$ time.
\end{theorem}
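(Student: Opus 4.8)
The plan is to expand the $O(nm)$-size representation into the explicit list $occ$ of occurrences by a pruned traversal of the derivation tree whose running time is charged against the reported occurrences. The starting point is the structural fact already used to build the representation: assuming $m \geq 2$, the interval $[i:i+m-1]$ of any occurrence is stabbed by a unique node $v$, and since $[i:i+m-1]$ straddles the boundary between the children of $v$, this occurrence is a \emph{crossing} occurrence recorded inside $t_j = \suffix(X_\ell,m-1)\prefix(X_r,m-1)$ for $X_j = \variable{v} = X_\ell X_r$. Hence the occurrences of $P$ in $S$ are partitioned according to their stabbing node, and every occurrence shows up as a crossing alignment in the $t_j$ of its stabbing variable. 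From the precomputed convolution values attached to $t_j$ I would first extract, for each variable $X_j$, the set $M_j$ of offsets whose $m$-length window meets the matching criterion (for example the desired threshold obtained by summing the relevant convolutions); to avoid double counting I restrict $M_j$ to offsets whose window genuinely straddles the $X_\ell \mid X_r$ boundary, so that alignments lying wholly inside a child are left to that child. This costs $O(m)$ per variable and folds into the $O(nm)$ convolution phase.

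Next I would run, as part of building the representation, a counting dynamic program over the variables in increasing index order (so $X_{\ell(j)},X_{r(j)}$ precede $X_j$). Writing $c_j = |M_j|$, the number $cnt(X_j)$ of occurrences inside $\derive(X_j)$ satisfies $cnt(X_j) = cnt(X_{\ell(j)}) + cnt(X_{r(j)}) + c_j$, with $cnt = 0$ for terminal variables, so that $cnt(X_n) = |occ|$; this is $O(n)$ time. Because $cnt$, $c_j$, and the relevance of each child depend only on the variable and not on its position in the tree, I can classify each variable once as \emph{crossing} ($c_j>0$), \emph{branching} (both children have $cnt>0$), \emph{pass-through} (exactly one child has $cnt>0$ and $c_j=0$), or \emph{dead} ($cnt=0$). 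The reporting query is then a recursion $Report(X_j,g)$ that, given the global starting position $g$ of $\derive(X_j)$ in $S$, outputs every crossing occurrence of $X_j$ (each offset in $M_j$ translated through the fixed position of $t_j$ inside $\derive(X_j)$ and then through $g$), and recurses into each child with $cnt>0$, passing $g$ to the left child and $g+|X_{\ell(j)}|$ to the right.

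The one step that threatens the $O(|occ|)$ bound, and which I expect to be the main obstacle, is a long chain of pass-through variables: these report nothing, yet a naive pruned DFS would spend time proportional to the derivation-tree depth to reach a single deep occurrence. I would remove this by precomputing, again in $O(n)$ time in index order, a skip pointer sending each pass-through variable directly to the nearest \emph{interesting} (branching or crossing) descendant along its unique relevant child path, together with the accumulated length offset; such a descendant always exists because every occurrence originates at a crossing variable. With this contraction the recursion jumps over pass-through chains in $O(1)$ per interesting variable, and every visited variable either reports at least one occurrence (crossing) or branches toward two nonempty subtrees. Since a binary tree has at most (number of leaves)$\,-1$ branching nodes, and each leaf is a crossing variable reporting at least one occurrence, the number of visited variables is $O(|occ|)$, giving total reporting time $O(|occ|)$ as claimed.
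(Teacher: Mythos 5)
Your proposal is correct and matches the approach the paper relies on: the paper gives no explicit proof of this theorem, only a pointer to Section 7 of \cite{philip11:_random_acces_gramm_compr_strin}, whose reporting scheme is exactly your pruned derivation-tree traversal driven by per-variable crossing-occurrence sets $M_j$ and subtree counts $\mathit{cnt}(X_j)$, with the auxiliary data folded into the construction of the compact representation (note that your restriction of $M_j$ to boundary-straddling windows is automatic, since $|\suffix(X_\ell,m-1)|, |\prefix(X_r,m-1)| \leq m-1$ forces every length-$m$ window of $t_j$ to cross the boundary). You also correctly isolate the one nontrivial obstacle --- long pass-through chains, which a naive pruned DFS traverses at cost proportional to derivation-tree height per occurrence --- and your $O(n)$-time skip pointers with accumulated length offsets are precisely the contraction needed to make every visited node either report an occurrence or branch into two occurrence-bearing subtrees, yielding the claimed $O(|occ|)$ bound.
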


\section{Improved algorithm}

The algorithm of the previous section 
is efficient when the given SLP is small, i.e., $nm = o(N)$.
However, $n$ can be as large as $O(N)$,
and hence it can be slower than the existing FFT-based $O(N \log m)$-time algorithm.

To overcome this, we use the following result:
\begin{lemma}[\cite{Goto2012SUq}]
For any SLP $\mathcal{S}$ of size $n$ describing a text $S$ of length $N$,
there exists a trie $T$ of size $O(\min\{nm, N - \alpha\})$ with $\alpha \geq 0$,
such that for any substring $Q$ of length $m$ of $S$,
there exists a directed path in $T$ that spells out $Q$.
The trie $T$ can be computed in linear time in its size.
\end{lemma}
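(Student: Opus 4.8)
The plan is to reduce the statement to the short explicit substrings $t_j$ that Section~\ref{sec:slp} already singles out, and then to exploit the fact that a trie automatically exposes every substring of an inserted string as a directed path. First I would invoke the stabbing argument: for every position $i$ with $1 \le i \le N-m+1$, the interval $[i:i+m-1]$ is stabbed by a unique variable $X_j = X_\ell X_r$, and the corresponding window $S[i:i+m-1]$ is entirely contained in $t_j = \suffix(X_\ell, m-1)\prefix(X_r,m-1)$. Hence every length-$m$ substring of $S$ is a length-$m$ substring of one of the $n$ strings $t_1,\dots,t_n$, each of length at most $2m-2$. This converts the problem into representing the length-$m$ substrings of $n$ explicit, short strings.

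Next I would construct $T$ by inserting each $t_j$ into a trie, merging common prefixes. The key observation is that in a trie \emph{every} directed path, not only one starting at the root, spells a substring of an inserted string: the contiguous segment of the branch realizing $t_j$ that runs from the depth-$(p-1)$ node to the depth-$q$ node spells $t_j[p:q]$. Consequently, for each length-$m$ substring $Q$ of some $t_j$ there is a directed path of $T$ spelling $Q$, and by the reduction this covers every length-$m$ substring of $S$. This establishes the required path property without storing the windows as separate root-to-leaf paths, which would otherwise inflate the size to $O(nm^2)$.

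For the size, the $O(nm)$ bound is immediate, since $\sum_j |t_j| \le n(2m-2)$ bounds the number of trie nodes even before any prefix sharing; each $t_j$ is materialized in $O(m)$ time by Lemma~\ref{label:prefix_decompression} (together with its symmetric counterpart for suffixes), and the trie is assembled incrementally in time linear in its final size. The $O(N-\alpha)$ bound is where the genuine content lies: here $\alpha \ge 0$ is defined as the amount of sharing the construction realizes, namely the number of node insertions that collapse onto nodes already present because the same substring is produced by repeated subtrees of $\mathcal{S}$. I would argue that the distinct length-$m$ windows number at most $N-m+1$, and then charge each surviving trie node to a first occurrence of a substring and each avoided node to $\alpha$, so that the total size is $N-\alpha$ up to lower-order terms.

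The hardest part will be obtaining the $O(N-\alpha)$ bound \emph{simultaneously} with linear-time construction. One cannot afford to build the $O(nm)$ trie first and compress it afterwards, since $nm$ may exceed the target size, so the construction must recognise and reuse substrings coming from repeated subtrees of the derivation tree on the fly. Pinning down the definition of $\alpha$ so that it both captures this sharing and keeps the bound tight is the crux, and is where I expect to rely most heavily on the structure of $\mathcal{S}$ and on the random-access machinery of Theorem~\ref{theo:random_access}.
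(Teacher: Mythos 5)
Your first two steps---the stabbing argument reducing the length-$m$ windows of $S$ to substrings of the $t_j$'s, and the $O(nm)$ trie with the observation that arbitrary directed paths (not just root paths) spell the required strings---are sound and agree with the construction the paper imports from \cite{Goto2012SUq} (note the paper gives no proof of this lemma; it is cited). The genuine gap is exactly where you flag it: the $O(N-\alpha)$ bound, and your sketched charging argument does not close it. First, $\alpha$ is not yours to define: the paper fixes $\alpha = \sum \{(\VarOcc(X_j) - 1)(|t_j| - (m-1)) \mid |X_j| \geq m\}$ immediately after the lemma, so an argument that sets $\alpha$ to ``whatever sharing the construction happens to realize'' proves a different statement. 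Second, with the stated $\alpha$, inserting each $t_j$ from the root provably fails the bound. Since each occurrence of a length-$m$ window is stabbed by exactly one node of the derivation tree, $\sum_{|X_j| \geq m} \VarOcc(X_j)(|t_j| - (m-1)) = N - m + 1$, hence $\sum_{|X_j| \geq m} (|t_j| - (m-1)) = N - m + 1 - \alpha$; your trie therefore has up to $\sum_j |t_j| = (N-m+1-\alpha) + n'(m-1)$ nodes, where $n'$ is the number of variables of length at least $m$. When $\VarOcc(X_j) = 1$ for all $j$ (so $\alpha = 0$ and $n = \Theta(N)$) over a large alphabet with negligible prefix sharing, this is $\Theta(Nm)$ against a promised $O(N)$: prefix sharing alone cannot absorb the $(m-1)$-character overlap that every $t_j$ carries at its front.

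The missing idea is precisely to never re-store that overlap. The first $\min\{m-1, |X_\ell|\}$ characters of $t_j$ spell $\suffix(X_\ell, m-1)$, which is already present in the trie as the suffix of an existing path, ending at a node recorded by a per-variable pointer maintained inductively (for $X_j = X_\ell X_r$, it is the endpoint of $X_j$'s newly inserted path when $|X_r| \leq m-1$, and is inherited from $X_r$ otherwise). One then inserts only $\prefix(X_r, m-1)$, i.e.\ $|t_j| - (m-1)$ fresh nodes, hanging off that existing node, with easy base cases for variables shorter than $m$. This overlapped layout is exactly why the lemma is phrased in terms of arbitrary directed paths rather than root-to-node paths: the path spelling $t_j$ does not begin at the root. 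The identity above then makes the node count telescope to $O(m + N - \alpha)$, truncating every insertion at $2m-2$ characters retains the $O(nm)$ cap, and each fresh node is produced in constant time via Lemma~\ref{label:prefix_decompression}---only prefixes are ever decompressed, so you need neither the symmetric suffix-decompression variant you posited nor Theorem~\ref{theo:random_access}---which yields the claimed linear-time construction in the output size.
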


Here $\alpha$ is a value that represents the amount of
redundancy that the SLP captures with respect to the length-$m$ substrings, which is defined by
$\alpha = \sum \{(\VarOcc(X_j) - 1) \cdot (|t_j| - (m-1)) \mid |X_j| \geq m, j = 1, \dots, n \}$,
where $\VarOcc(X_j)$ denotes the number of times a variable $X_j$ occurs
in the derivation tree, i.e., 
$\VarOcc(X_j) = |\{ v \mid \variable{v}=X_j\}|$.

By the above lemma,
computing the convolution between an SLP-compressed string
and a pattern reduces to computing 
the convolution between a trie and a pattern.
In the following subsection, we will present 
our efficient algorithm to compute the convolution between a trie and a pattern.

\subsection{Convolution between trie and pattern}

Here we consider the convolution between a trie $T$ and a pattern $P \in \Sigma^{+}$.
For any node $v$ of $T$ and a positive integer $k$, 
let $\sstr_{T}(v, k)$ be the suffix of the path from the root of $T$
to $v$ of length $\min\{k, \depth(v)\}$.
The subproblem to solve is formalized as follows:
\begin{problem}\label{prob:convolution_trie}
Given a trie $T$ and a pattern $P$ of length $m$, 
for all nodes $v$ of $T$ whose depth is at least $m$,
compute $C_{T}(v) = \sum_{j = 1}^{m}\sstr_{T}(v, m)[j]P[j]$.
\end{problem}

Figure~\ref{fig:input_trie} illustrates an instance of Problem~\ref{prob:convolution_trie}.
Figure~\ref{fig:conv_52413} shows the values of the
convolution between the trie of Figure~\ref{fig:input_trie} and pattern $5 \ 2 \ 4 \ 1 \ 3$.
\begin{figure}[tb]
  \centerline{\includegraphics[width=0.6\textwidth]{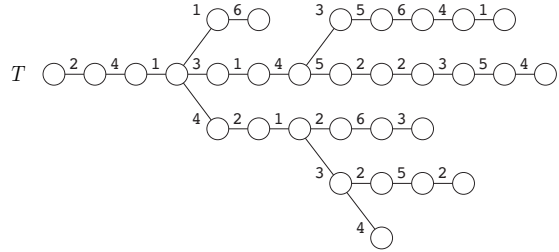}}
  \caption{Instance of input trie $T$.}
  \label{fig:input_trie}
\end{figure}
\begin{figure}[tb]
  \centerline{\includegraphics[width=0.6\textwidth]{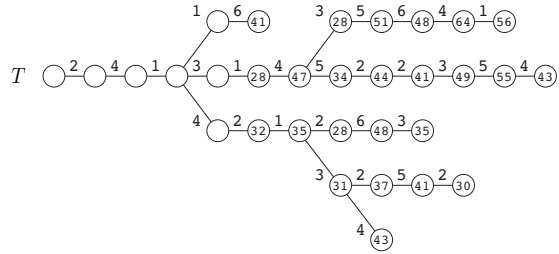}}
  \caption{The convolution between $T$ and pattern $5 \ 2 \ 4 \ 1 \ 3$. The value in each node
is the value of the convolution for the node and the pattern. The nodes with depth less than $|P| = 5$ are left blank.}
  \label{fig:conv_52413}
\end{figure}

\begin{theorem}
Problem~\ref{prob:convolution_trie} can be solved in $O(r \log m)$ time, 
where $r$ is the size of $T$ and $m = |P|$.
\end{theorem}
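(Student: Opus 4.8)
The plan is to reduce Problem~\ref{prob:convolution_trie} to a family of FFT-based correlations whose total size is $O(r)$, so that the cost becomes $O(r\log m)$. First I would note, as in the linear case, that $C_T(v)$ depends only on the length-$m$ suffix $\sstr_T(v,m)$ of the root-to-$v$ path, which is a contiguous downward path in $T$. I would then cut $T$ at every depth that is a multiple of $m$, obtaining \emph{blocks}: for a node $b$ with $\depth(b)=km$, let the block $B_b$ be the descendants of $b$ at depths $km+1,\dots,(k+1)m$. Since consecutive cuts are exactly $m$ apart and each window has length $m$, the window ending at any $v\in B_b$ crosses precisely the one cut at depth $km$. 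Hence $C_T(v)=U_b(s)+L(v)$, where $s=\depth(v)-km$ is the relative depth of $v$ in its block, $U_b(s)$ is the contribution of the part of the window lying on the unique path from the root to $b$ (the \emph{context}), and $L(v)$ is the contribution of the part lying strictly inside $B_b$. The topmost block (depths $1,\dots,m$) has empty context and is handled directly.

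The key point is that the context part $U_b(s)$ is identical for \emph{all} nodes of $B_b$ at a given relative depth $s$. Writing the length-$(m-1)$ context of $b$ as $a_1\cdots a_{m-1}$, a direct check gives $U_b(s)=\sum_{i=s}^{m-1} a_i\,P[i-s+1]$, i.e.\ a correlation of the context with $P$, so the whole vector $(U_b(1),\dots,U_b(m))$ comes from one FFT of size $O(m)$ and is then shared by every node of $B_b$. The remaining term $L(v)=\sum_{t=km+1}^{\depth(v)} \sstr_T(v,m)[\,t-\depth(v)+m\,]\cdot P[t-\depth(v)+m]$ is, over the nodes of the single block $B_b$, exactly an instance of Problem~\ref{prob:convolution_trie} on the shallow trie $B_b$ (of depth at most $m$) against suffixes of $P$; so it is produced by FFTs run inside $B_b$, recursively by the same blocking.

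Finally I would bound the total cost. Since the blocks partition the non-root nodes, $\sum_b|B_b|=O(r)$, and level-ancestor/depth lookups needed to read off $U_b(s)$ and $L(v)$ for each $v$ and to sum them are $O(r)$ bookkeeping. \textbf{The main obstacle is the amortization of the context correlations across branching.} A single size-$m$ FFT per block costs $O(m\log m)$, which amortizes to $O(\log m)$ per node only when a block has $\Omega(m)$ nodes; a trie may however contain many boundary nodes with very shallow blocks (a bushy top followed by short tails), and an independent FFT per such block would inflate the bound to $O(rm\log m)$. I would resolve this by exploiting that boundary nodes sharing ancestors have heavily overlapping contexts: for a shallow block only a few alignments $s$ are actually needed, and for a \emph{fixed} alignment $s$ the value $U_b(s)$ is a depth-weighted path sum that can be computed simultaneously for all boundary nodes of a band in a single sweep linear in the band size, rather than by an FFT per block. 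Balancing the per-boundary FFT (cheap when blocks are tall) against this per-alignment sweep (cheap when few alignments are needed) so that the total context work telescopes to $O(r\log m)$ is the crux of the argument; the lower parts are charged analogously against $\sum_b|B_b|=O(r)$.
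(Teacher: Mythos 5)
Your blocking at depths that are multiples of $m$, and the split $C_{T}(v)=U_b(s)+L(v)$ with all context values $U_b(\cdot)$ coming from one FFT per boundary node, are algebraically sound (the window does cross exactly one cut, and your formula for $U_b(s)$ is correct), but the argument has two genuine gaps and does not close. First, the recursion for the inside parts $L(v)$ is vacuous: each block $B_b$ has depth at most $m$, so ``the same blocking'' introduces no further cuts inside a block, and the topmost block is simply the whole trie truncated at depth $m$. The base case you are left with --- every node $v$ of a depth-$\leq m$ trie needs the dot product of its length-$s$ root path against the suffix $P[m-s+1:m]$, where $s$ is its own depth --- is exactly the hard core of Problem~\ref{prob:convolution_trie}, and you give no algorithm for it; computing it per node, or by an FFT per root-to-leaf path, can cost $\Theta(r_0 m)$ or $\Theta(r_0 m\log m)$ on a star of paths of size $r_0$.

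Second, the amortization you yourself flag as ``the crux'' provably fails for the two tools you offer. Take a path of length $m-\frac{1}{2}\log m$ from the root, then a complete binary tree with $p=\sqrt{m}$ leaves at depth $m$ (the boundary nodes), each continuing as a path of length $\ell=\sqrt{m}$; then $r=\Theta(m)$. A size-$m$ FFT per boundary node costs $\Theta(m^{1.5}\log m)$, while the per-alignment band sweep costs $\Theta(m)$ per alignment and $\ell=\sqrt{m}$ alignments are needed, i.e.\ $\Theta(m^{1.5})$; the minimum of the two is $\omega(r\log m)$, and there is only one band here, so no balancing between them helps. The missing mechanism is one that computes the full convolution along a \emph{single} path by FFT and transfers it to nearby same-depth nodes via short corrections. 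That is exactly how the paper proceeds: it uses a long path decomposition (longest path first, recursing on the hanging forest) instead of depth-$m$ blocking. A long path with $d\geq m$ nodes gets a direct FFT at $O(\log m)$ amortized per node, and a short path $(w_1,\dots,w_d)$ with $d<m$ is charged against an already-processed sibling path $(z_1,\dots,z_d)$ with $\parent(w_1)=\parent(z_1)$ via $C_{T}(w_i)=C_{T}(z_i)-C'_{T}(z_i)+C'_{T}(w_i)$, where $C'_{T}$ involves only the last $d$ window positions and is obtained from length-$O(d)$ convolutions against $P[m-d+1:m]$, again $O(\log m)$ per node. This sibling-difference trick (or an equivalent sharing-across-branching argument) is the idea your proposal lacks, and it simultaneously resolves both your unproved amortization and your unresolved base case.
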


\begin{proof}
Assume that the height of $T$ is at least $m$ since otherwise no computation is needed.
We show how to compute $C_{T}(v)$ in $O(\log m)$ amortized time for each node $v \in T$.
We consider the \emph{long path decomposition}
such that $T$ is decomposed into its longest path 
and a forest consisting of the nodes that are not contained in the longest path.
We recursively apply the above decomposition to all trees in the forest,
until each subtree consists only of a single path.
Figure~\ref{fig:trie_lpd} shows the long path decomposition 
of the trie shown in Figure~\ref{fig:input_trie}.
\begin{figure}[tb]
  \centerline{\includegraphics[width=0.6\textwidth]{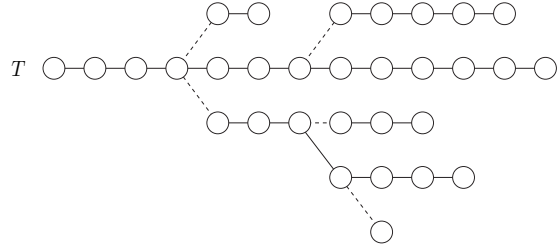}}
  \caption{The long path decomposition of the trie shown in Figure~\ref{fig:input_trie}.}
  \label{fig:trie_lpd}
\end{figure}
It is easy to see that we can compute the long path decomposition in $O(r)$ time.
For each path, we compute the convolution by FFT.
Let $(w_{1}, w_{2}, \dots, w_{d})$ be one of the long paths, 
where $d$ is the number of nodes on the path.
\begin{itemize}
  \item When $d \geq m$: 
        It is enough to compute the convolution between $\sstr_{T}(w_{d}, d+m-1)$ and $P$, 
        which takes $O((d+m-1) \log m)$ time, i.e., $O(\frac{(d+m-1)}{d} \log m) = O(\log m)$ time per node.
      \item When $d < m$: The same method costs too much, i.e.,
        $O(\frac{(d+m-1)}{d} \log m)$ time per node, and thus we need
        a trick.
        The assumption that the height of $T$ is at least $m$ implies that $w_{1}$ is not the root of $T$ 
        since otherwise, the longest path in $T$ would be
        $(w_{1}, w_{2}, \dots, w_{d})$,
        and $d-1 (< m)$ would be the height of $T$, a contradiction.
        Consequently, from the definition of the long path decomposition, 
        there must exist a path (not necessarily a long path) $(z_{1}, z_{2}, \dots, z_{d})$ such that $w_{1} \neq z_{1}$ and $\parent(w_{1}) = \parent(z_{1})$.
        For any $1 \leq i \leq d$ with $\depth(w_{i}) \geq m$, $C_{T}(w_{i})$ can be written as follows:
        \begin{eqnarray*}
            C_{T}(w_{i}) &=& \sum_{j = 1}^{m-d}\sstr_{T}(w_{i}, m)[j]P[j] + \sum_{j=m-d+1}^{m}\sstr_{T}(w_{i}, m)[j]P[j]\\
            &=& \sum_{j = 1}^{m-d}\sstr_{T}(z_{i}, m)[j]P[j] + \sum_{j=m-d+1}^{m}\sstr_{T}(w_{i}, m)[j]P[j]\\
            &=& C_{T}(z_{i}) - \sum_{j=m-d+1}^{m}\sstr_{T}(z_{i}, m)[j]P[j] + \sum_{j=m-d+1}^{m}\sstr_{T}(w_{i}, m)[j]P[j]\\
            &=& C_{T}(z_{i}) - C'_{T}(z_{i}) + C'_{T}(w_{i}),
        \end{eqnarray*}
        where $C'_{T}(v) = \sum_{j=m-d+1}^{m}\sstr_{T}(v, m)[j]P[j]$.
        For all $1 \leq i \leq d$, $C'_{T}(w_{i})$ (resp. $C'_{T}(z_{i})$) can be computed in $O((d+d-1) \log d)$ time 
        by convolution between $\sstr_{T}(w_{d}, d+d-1)$ (resp. $\sstr_{T}(z_{d}, d+d-1)$) and $P[m-d+1:m]$.
        Therefore, assuming that $C_{T}(z_{i})$ is already computed for all $1 \leq i \leq d$, 
        we can compute $C_{T}(w_{i})$ for all $1 \leq i \leq d$ in $O(\frac{(d+d-1)}{d} \log d) = O(\log m)$ time per node.
\end{itemize}
It follows from the above discussion that we can solve Problem~\ref{prob:convolution_trie} in $O(r \log m)$ time
by computing values of convolution by the longest path first and making use of the result when encountering a short path whose length is less than $m$.
\end{proof}

We obtain the main result of this paper:
\begin{theorem}
Given SLP $\mathcal{S}$ of size $n$ representing 
a string $S$ of length $N$, and pattern $P$ of length $m$, 
we can compute an $O(\min\{nm, N-\alpha\})$-size representation of convolution $C$ for $S$ and $P$ 
in $O(\min\{nm, N-\alpha\} \log m)$ time, where $\alpha \geq 0$.
Given a text position $1 \leq i \leq N - m + 1$,
our representation returns $C[i]$ in $O(\log N)$ time.
\end{theorem}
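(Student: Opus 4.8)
The plan is to assemble the final theorem by combining the two main ingredients already established: the reduction from SLP-convolution to trie-convolution (the quoted Lemma of \cite{Goto2012SUq}), and the $O(r\log m)$-time trie-convolution algorithm (the preceding Theorem). First I would invoke the Lemma to build, in time linear in its size, a trie $T$ of size $r = O(\min\{nm, N-\alpha\})$ such that every length-$m$ substring $Q$ of $S$ is spelled out by some root-to-node path in $T$. I would then run the trie-convolution algorithm on $T$ and $P$, computing $C_T(v)$ for all nodes $v$ of depth at least $m$ in $O(r\log m) = O(\min\{nm, N-\alpha\}\log m)$ time. Since every length-$m$ substring of $S$ corresponds to such a node, these values are precisely the convolution values $C[i]$ we must represent; storing them together with the trie gives the claimed $O(\min\{nm, N-\alpha\})$-size representation.

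The main work then lies in the query: given a text position $i$, I must locate the node $v \in T$ whose path spells the length-$m$ substring $S[i:i+m-1]$, so that $C[i] = C_T(v)$. I would do this by first using Theorem~\ref{theo:random_access} to compute the stabbing variable $\variable{\xi_\mathcal{S}(i,i+m-1)}$ in $O(\log N)$ time. The construction in \cite{Goto2012SUq} associates each length-$m$ window inside a substring $t_j$ (of a stabbing variable $X_j$) with a specific node of $T$; using the offset of position $i$ within $\deriveInt$ of the stabbing node, together with the text-position computation of \cite{bannaiIT12:_LZ78_grammar_compr_} (also $O(\log N)$ time), I can recover that offset and index directly into the appropriate node of $T$, retrieving the precomputed $C_T(v)$ in $O(1)$ additional time. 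Hence each query costs $O(\log N)$ overall.

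The step I expect to be the main obstacle is making the query-to-node correspondence fully explicit: I must argue precisely how a position $i$ maps, via its stabbing variable and the local offset, to the unique trie node whose depth-$m$ path equals $S[i:i+m-1]$. This requires care because $t_j$ contains several overlapping length-$m$ windows, and I must ensure that the indexing scheme from the trie construction of \cite{Goto2012SUq} aligns consistently with the offsets produced by the random-access machinery. Once this bookkeeping is pinned down, the time and space bounds follow immediately from the two cited results, and combining everything yields the stated $O(\min\{nm, N-\alpha\}\log m)$ construction time, $O(\min\{nm, N-\alpha\})$ space, and $O(\log N)$ query time.
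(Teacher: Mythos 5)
Your proposal is correct and matches the paper's (largely implicit) argument: the paper derives this theorem by directly combining the trie-construction lemma of \cite{Goto2012SUq} with the $O(r\log m)$ trie-convolution theorem, and handles $C[i]$ queries exactly as in the basic algorithm's theorem, via the stabbing variable of Theorem~\ref{theo:random_access} and the text-position computation of \cite{bannaiIT12:_LZ78_grammar_compr_}. The position-to-trie-node bookkeeping you flag as the main obstacle is indeed left unstated in the paper, so your treatment is, if anything, more explicit than the original.
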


We note that a similar result to Theorem~\ref{theo:matching}
holds for our $O(\min\{nm, N-\alpha\})$-size representation of convolution,
and hence we can compute all approximate occurrences in time linear in its size.

\section{Conclusions and future work}

In this paper we showed how, 
given an SLP-compressed text of size $n$ and an uncompressed pattern of length $m$, 
we can compute the convolution between the text and the pattern efficiently.
We employed an $O(\min\{nm, N-\alpha\})$-size trie representation of all substrings 
of length $m$ in the text, 
which never exceeds the uncompressed size $N$ of the text.
By introducing a new technique to compute the convolution between a trie of size $r$ 
and a pattern of length $m$ in $O(r \log m)$ time, 
we achieve an $O(\min\{nm, N-\alpha\} \log m)$-time solution to the problem.
A consequence of this result is that, 
for any string matching problem reducible to convolution,
there exists a CPS algorithm that does not require decompression
of the entire compressed string.

However, it is not yet obvious whether 
we can straightforwardly adapt an algorithm which also uses 
techniques other than convolution, such as the one in~\cite{Amir2004Fas}.
Future work of interest is to clarify the above matter,
and to implement our algorithms and conduct experiments on highly compressible texts.

\bibliographystyle{plain}
\bibliography{ref}

\end{document}